\newcommand{\Tr}{\textrm{Tr}}
\newtheorem{thm}{Theorem}
\begin{document}
\title{Operator extension of strong subadditivity of entropy}
\author{Isaac H. Kim\footnote{Institute of Quantum Information, California Institute of Technology, Pasadena CA 91125, USA}}

\date{\today}

\maketitle
\begin{abstract}
We prove an operator inequality that extends strong subadditivity of entropy: after taking a trace, the operator inequality becomes the strong subadditivity of entropy.
\end{abstract}
\section{Introduction}
There are inequalities satisfied  by Von Neumann entropy, such as nonnegativity and strong subadditivity.\cite{Lieb1973,Nielsen2000} Strong subadditivity plays an important role in quantum information theory: it is essential in virtually all nontrivial coding theorems. Strong subadditivity has also found applications to geometric entropy\cite{Casini2004}, conformal field theory(CFT)\cite{Casini2004a}, and topological entanglement entropy\cite{Grover2011}.

In this paper, we prove a certain operator inequality that extends strong subadditivity. Motivation for studying such inequality comes from recent observation of Li and Haldane(LH).\cite{Li2008} LH conjectured and numerically substantiated that eigenvalues of reduced density matrix contains universal information about the quantum phase. While their result concerns a variational wavefunction for a fractional quantum hall system, author was able to reproduce some of the consequences without resorting to the underlying structure of the wavefunction dictated by the CFT.\cite{Kim2012a}

One important assumption was that certain operator generalization of strong subadditivity holds for general quantum states. Specifically, author conjectured the following inequality.
\begin{equation}
\Tr_{AB}(\rho_{ABC} (\hat{H}_{AB}  + \hat{H}_{BC} - \hat{H}_B - \hat{H}_{ABC})) \geq 0,
\end{equation}
where $\hat{H}_A$ is formally defined as
\begin{equation}
\hat{H}_A = -I_{A^c} \otimes \log (\rho_A).
\end{equation}
Other operators are defined similarly. Whenever logarithm of reduced density matrix appears, tensor product with identity for the rest of the subsystems is implicitly assumed. We prove the conjecture in this paper.

\section{Operator generalization of strong subadditivity}
Since Lieb and Ruskai's seminal result\cite{Lieb1973}, alternative proofs for strong subadditivity have been introduced by several authors.\cite{Petz2003,Nielsen2007,Ruskai2007,Effros2009} In particular, Effros recently presented a proof based on perspective of operator convex function.\cite{Effros2009} Effros basically extended the notion of perspective function from real numbers to operators. Given a function $f$, perspective of $f$ is defined as
\begin{equation}
g(x,t) = f(x/t)t.
\end{equation}

If $f(x)$ is convex, $g(x,t)$ is jointly convex in $x$ and $t$. Main insight of Effros is that similar statement holds for function $f$ that is \emph{operator convex.} To be more precise, he showed the following statement.
\begin{thm}
(Effros 2009) If $f(x)$ is operator convex, and $[L,R]=0$, perspective
\begin{equation}
g(L,R) = f(L/R) R
\end{equation}
is jointly convex in the sense that if $L=cL_1 + (1-c)L_2$ and $R=cR_1 + (1-c)R_2$ with $[L_i,R_i]=0$ $(i=1,2)$, $0\leq c \leq 1$,
\begin{equation}
g(L,R) \leq cg(L_1,R_1) + (1-c) g(L_2, R_2).
\end{equation}
\end{thm}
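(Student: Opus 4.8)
The plan is to collapse the joint convexity of the perspective onto a single, elementary building block by invoking the canonical integral representation of operator convex functions. Because $[L,R]=0$, the quotient $L/R = LR^{-1}$ is a well-defined self-adjoint operator and $f(L/R)$ is ordinary functional calculus; the same holds for each commuting pair $(L_i,R_i)$, so every quantity in the statement is meaningful (I work on the dense set where the $R$'s are invertible and extend by continuity). I would then recall the classical representation of an operator convex $f$ on $[0,\infty)$,
\begin{equation}
f(t) = f(0) + \beta t + \gamma\, t^2 + \int_0^\infty \frac{t^2}{t+s}\, d\mu(s),
\end{equation}
with $\gamma \geq 0$ and $\mu$ a positive measure. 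The whole proof is organized around pushing this representation through the perspective construction.

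Next I would apply the perspective termwise, using $[L,R]=0$ to simplify. Setting $t = L/R$ and multiplying by $R$, the constant contributes $f(0)\,R$, the linear term contributes $\beta L$, the quadratic term contributes $\gamma\, L^2/R = \gamma\, L R^{-1} L$, and the integrand $\tfrac{t^2}{t+s}$ contributes $L(L+sR)^{-1}L$, giving
\begin{equation}
g(L,R) = f(0)\,R + \beta L + \gamma\, L R^{-1} L + \int_0^\infty L (L+sR)^{-1} L \, d\mu(s).
\end{equation}
The first two terms are \emph{affine} in $(L,R)$, so they behave as equalities under the convex combination and contribute nothing to the inequality. The remaining terms all have the shape $A B^{-1} A$ with $A$ Hermitian and $B>0$, and crucially $A = L$, $B = R$ (respectively $B = L+sR$) are \emph{affine} functions of the pair $(L,R)$. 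Since $L = cL_1+(1-c)L_2$ and $R = cR_1+(1-c)R_2$, a convex combination of the pairs induces the same convex combination of these $A$'s and $B$'s.

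The crux is therefore the single lemma that the map $(A,B)\mapsto A B^{-1} A$ is jointly convex on Hermitian $A$ and positive definite $B$. I would prove this by Schur complements: the block positivity
\begin{equation}
\begin{pmatrix} A B^{-1} A & A \\ A & B \end{pmatrix} \geq 0
\end{equation}
holds for each pair, positivity is preserved under the convex combination $c(\cdot)_1 + (1-c)(\cdot)_2$, and reading off the Schur complement of the lower-right block of the combined matrix yields exactly $\bar A\,\bar B^{-1}\bar A \leq c A_1 B_1^{-1}A_1 + (1-c)A_2 B_2^{-1}A_2$ with $\bar A,\bar B$ the combined affine expressions. Applying this to each term, noting the affine terms give equalities, and integrating the pointwise inequalities against the positive measure $d\mu$ (which preserves the operator order) assembles the desired bound $g(L,R) \leq c\,g(L_1,R_1) + (1-c)\,g(L_2,R_2)$. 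The main obstacle is not this lemma, which is classical, but rather the clean handling of the integral representation: justifying the termwise passage through the perspective, controlling the domain and the non-invertible $R$ case by continuity, and confirming convergence of the integral so that the order inequality survives integration.
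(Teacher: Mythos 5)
Your proposal is correct, but note that the paper itself never proves this theorem: it is quoted as a known result from Effros \cite{Effros2009} and used as a black box, so the relevant comparison is with Effros' own argument, and yours takes a genuinely different route. Effros deduces joint convexity of the perspective directly from the Hansen--Pedersen--Jensen operator inequality $f(a^{\dagger}xa + b^{\dagger}yb) \leq a^{\dagger}f(x)a + b^{\dagger}f(y)b$ (valid for operator convex $f$ with $a^{\dagger}a + b^{\dagger}b = 1$), applied with $a = (cR_1)^{1/2}R^{-1/2}$, $b = ((1-c)R_2)^{1/2}R^{-1/2}$, $x = L_1R_1^{-1}$, $y = L_2R_2^{-1}$; commutativity of each pair and a final conjugation by $R^{1/2}$ then give the statement in a few lines, with no structure theorem and uniformly for every operator convex $f$. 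Your route instead pushes the integral representation $f(t) = f(0) + \beta t + \gamma t^2 + \int_0^{\infty} \frac{t^2}{t+s}\,d\mu(s)$ through the perspective and reduces everything to the classical Kiefer--Lieb--Ruskai lemma that $(A,B) \mapsto AB^{-1}A$ is jointly convex, proved by Schur complements. What your approach buys: the mechanism is completely transparent (affine terms drop out as exact equalities, each integrand is handled by one elementary block-matrix fact), and for the one function the paper actually needs, $f(x) = x\log x$, the measure is explicit, namely $x\log x = \int_0^{\infty}\left(\frac{x}{1+s} - \frac{x}{x+s}\right)ds$, so the general representation theorem can be bypassed entirely. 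The cost is exactly where you located it: you must quote or prove the representation theorem and police the analytic details (integrability of $\mu$, non-invertible $R$, passage to limits), none of which arise in Effros' conjugation trick. Both arguments are valid; yours trades Effros' single high-level input for lower-level but heavier machinery, and would serve well if one wanted the paper to be self-contained rather than to lean on the citation.
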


We choose a matrix algebra $\mathcal{B}(\mathcal{H})$ with an inner product structure of $\langle X,Y\rangle = \Tr(XY^{\dagger})$, where $X,Y$ are $n \times n$ matrices. Following Effros, we choose $L$ and $R$ to be superoperators that multiplies matrix from left or right. For $X \in \mathcal{B}(\mathcal{H})$, $L$ and $R$ are defined as follows.
\begin{align}
L X &= \rho X \nonumber \\
RX &= X \sigma.
\end{align}
$L$ and $R$ commute with each other. One can also show the following relations.
\begin{align}
\log (L) X &= \log (\rho) X  \nonumber \\
\log (R) X &= X\log (\sigma).
\end{align}

Effros' result implies the following statement.
\begin{thm}
\begin{equation}
\Tr_{AB}(\rho_{ABC} (\hat{H}_{AB} + \hat{H}_{BC} - \hat{H}_B - \hat{H}_{ABC})) \geq 0.
\end{equation}
\end{thm}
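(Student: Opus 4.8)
The plan is to specialize Effros' theorem to the operator convex function $f(x)=-\log x$. Its scalar perspective is $g(x,t)=-t\log(x/t)=t\log t-t\log x$, so in the superoperator language fixed above, where $L$ and $R$ commute, the perspective reads
\begin{equation}
g(L,R)=(\log R-\log L)\,R .
\end{equation}
The first thing I would establish is the dictionary between this perspective and entropy. Taking $LX=\rho X$ and $RX=X\sigma$, a one-line computation gives
\begin{equation}
g(L,R)\,I=\sigma\log\sigma-(\log\rho)\,\sigma ,
\end{equation}
whose trace is the relative entropy $\Tr(\sigma\log\sigma-\sigma\log\rho)$. This identity is the mechanism that converts Effros' convexity, which is an inequality between superoperators and hence itself operator-valued, into statements about logarithms of reduced density matrices.

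Next I would rewrite the target so that its structure becomes visible. Grouping the four terms,
\begin{equation}
\hat H_{AB}+\hat H_{BC}-\hat H_B-\hat H_{ABC}=(\log\rho_{ABC}-\log\rho_{BC})-(\log\rho_{AB}-\log\rho_B),
\end{equation}
so the assertion is the operator inequality on $C$
\begin{equation}
\Tr_{AB}\!\big[\rho_{ABC}(\log\rho_{ABC}-\log\rho_{BC})\big]\succeq\Tr_{AB}\!\big[\rho_{ABC}(\log\rho_{AB}-\log\rho_B)\big].
\end{equation}
Each side is built from one of the commuting pairs $(\rho_{ABC},\rho_{BC})$ and $(\rho_{AB},\rho_B)$ through the perspective $g$, and after the further trace $\Tr_C$ the inequality collapses to $S(AB)+S(BC)\geq S(ABC)+S(B)$, ordinary strong subadditivity. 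The content of the theorem is precisely that the comparison already holds before that last trace, resolved as an operator on $C$.

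The engine is then Effros' joint convexity, which I would try to apply while contracting only the $A$ and $B$ registers, so that its conclusion survives as an operator on $C$. The aim is to exhibit the superoperators attached to the fine pair $(\rho_{ABC},\rho_{BC})$ as a convex combination $L=\sum_i c_iL_i$, $R=\sum_i c_iR_i$ of commuting pairs whose endpoints carry the coarse data $(\rho_{AB},\rho_B)$, with the weights $c_i$ coming from an operator resolution of the subsystem summed over. Since $L$ and $R$ are left and right multiplications, the commutation hypotheses $[L_i,R_i]=0$ hold automatically, so the inequality $g(L,R)\le\sum_i c_i\,g(L_i,R_i)$ is available; evaluating it on $\rho_{ABC}^{1/2}$ and tracing only over $A$ and $B$ should then produce the displayed operator inequality, with the final trace over $C$ returning ordinary strong subadditivity as a consistency check.

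I expect the main obstacle to be keeping the argument genuinely resolved on $C$. Ordinary strong subadditivity is the monotonicity of relative entropy under the partial trace over $C$, and the usual route from joint convexity to monotonicity realizes that partial trace as a twirl over $C$; but here $C$ must be retained, so that device is unavailable and the convex decomposition has to be engineered to reproduce the correct average while leaving the $C$ register entirely untouched. A secondary difficulty is that the perspective carries a left/right multiplication asymmetry, visible in the placement of $\sigma$ above, whereas the target is the symmetric quantity $\Tr_{AB}[\rho_{ABC}(\cdots)]$; bridging the two calls for cyclicity of the trace, which is legitimate on the contracted factors $A$ and $B$ but not on $C$. Controlling this asymmetry while preserving operator positivity on $C$ is where I expect the real work to lie.
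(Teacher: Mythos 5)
You have reconstructed the right framework --- Effros' perspective, joint convexity of the resulting quasi-entropy, and a convex decomposition by an orthogonal unitary basis that must leave $C$ untouched --- but the two concrete steps you propose would both fail. First, your grouping points at the wrong monotonicity. Pairing $(\rho_{ABC},\rho_{BC})$ against $(\rho_{AB},\rho_B)$ writes the statement as monotonicity of relative entropy under $\Tr_C$, and that is precisely the channel which cannot be realized by a twirl leaving $C$ alone; your last paragraph correctly senses this obstacle but does not remove it. Moreover, the decomposition you ask for runs in the impossible direction: joint convexity requires the \emph{coarse} pair to be a convex combination of rotated copies of the \emph{fine} pair, never the reverse ($\rho_{ABC}$ cannot be assembled from data carried by $\rho_{AB}$). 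The workable grouping is the other one, $(\hat{H}_{AB}-\hat{H}_{ABC})$ against $(\hat{H}_B-\hat{H}_{BC})$, i.e.\ monotonicity under $\Tr_A$: taking $\rho=\rho_{ABC}$, $\sigma=\rho_{AB}\otimes I_C/d_C$ and an orthogonal unitary basis $\{U_{A,i}\}$ of $\mathcal{B}(\mathcal{H}_A)$, one has
\begin{equation}
\frac{1}{d_A^2}\sum_{i=1}^{d_A^2} U_{A,i}\,\rho_{ABC}\,U_{A,i}^{\dagger}=\frac{I_A}{d_A}\otimes\rho_{BC},
\end{equation}
so the coarse pair $\left(\frac{I_A}{d_A}\otimes\rho_{BC},\,\frac{I_A}{d_A}\otimes\rho_{B}\otimes\frac{I_C}{d_C}\right)$ is an average of unitarily rotated fine pairs, with every rotation acting trivially on $C$.

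Second, and more fundamentally, your mechanism for keeping the conclusion operator-valued on $C$ does not exist. Joint convexity of the superoperator perspective is a statement about quadratic forms only: it yields the scalar inequalities $\langle g(L,R)O,O\rangle\leq\sum_i c_i\langle g(L_i,R_i)O,O\rangle$ and nothing pointwise. Evaluating on $\rho_{ABC}^{1/2}$ therefore produces a single number (this is exactly the step that degenerates to ordinary scalar strong subadditivity), and ``tracing only over $A$ and $B$'' of a number is undefined; nor does convexity of $g$ as a superoperator imply any matrix inequality for $g(L,R)X$ that could then be partially traced. The missing idea is the paper's choice of test operators: run the quadratic-form inequality for $O=I_{AB}\otimes P_C$ with $P_C$ an \emph{arbitrary} projector on $C$. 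Since each $U_{A,i}$ commutes with $O$, every rotated fine term equals the unrotated one, and the two sides of the convexity inequality become $\Tr[(\,\cdot\,)P_C]$ evaluated on Hermitian operators on $C$, with the $\log d_C$ constants cancelling; positivity against every projector $P_C$ is what converts this family of scalar inequalities into the claimed operator inequality $\Tr_{AB}(\rho_{ABC}(\hat{H}_{AB}+\hat{H}_{BC}-\hat{H}_B-\hat{H}_{ABC}))\geq 0$. (Your use of $f(x)=-\log x$ in place of $f(x)=x\log x$ is harmless --- it merely exchanges the roles of $L$ and $R$ --- but without the correct grouping and the $P_C$ device the argument does not close.)
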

\begin{proof}
Let $f(x) = x\log x$. Since $f(x)$ is operator convex\cite{Bhatia1997}, $g(L,R)=L \log (L) - L \log (R)$ is jointly convex in $L$ and $R$. Therefore,
\begin{align}
\langle g(L,R)(O), O \rangle = \Tr(\rho \log \rho O O^{\dagger} - \rho O \log \sigma O^{\dagger})\label{eq:quasientropy}
\end{align}
is jointly convex in $L$ and $R$ for all $O \in \mathcal{B}(\mathcal{H})$.\footnote{Joint convexity of Eq.\ref{eq:quasientropy} was originally proved by Petz.\cite{Petz1986}} Choose $\rho= \rho_{ABC}$, $\sigma = \rho_{AB} \otimes \frac{I_C}{d_C}$, $O= I_{AB} \otimes P_C$, where $P_C$ is a projector supported on $C$ and $d_C$ is the dimension of $C$. Note
\begin{equation}
\frac{I_A}{d_A} \otimes \rho_{BC} = \frac{1}{d_A^2} \sum_{i=1}^{d_A^2} U_{A,i} \rho_{ABC} U_{A,i}^{\dagger}
\end{equation}
for a set of unitaries $\{U_{A,i} \}$ that forms an orthogonal basis for $\mathcal{B}(\mathcal{H}_A)$. An example can be found in Reference \cite{Pittenger2000}. Using joint convexity,
\begin{equation}
\Tr(\frac{I_A}{d_A} \otimes \rho_{BC} ( \hat{H}_B- \hat{H}_{BC})P_C) \leq \Tr(\rho_{ABC} (\hat{H}_{AB}-\hat{H}_{ABC})P_C).
\end{equation}
Left hand side of the inequality is equal to $\Tr(\rho_{ABC} (\hat{H}_{B} - \hat{H}_{BC})P_C)$. Since the inequality holds for all $P_C$,
\begin{equation}
\Tr_{AB}(\rho_{ABC} (\hat{H}_{AB} + \hat{H}_{BC} - \hat{H}_B - \hat{H}_{ABC})) \geq 0.
\end{equation}
\end{proof}
One may wish to find a similar inequality when partial trace is restricted to $A$ or $B$. In both cases, the resulting operators are not even Hermitian.

\section{Conclusion}
We have proved an operator extension of strong subadditivity conjectured in Reference \cite{Kim2012a}. While our method was based on the operator convexity of the function $f(x) = x\log x$, Effros' approach can be generalized to other operator convex functions as well. It would be interesting to find applications of these new family of operator inequalities.

{\it Acknowledgements---} This research was supported in part by NSF under Grant No. PHY-0803371, by ARO Grant No. W911NF-
09-1-0442, and DOE Grant No. DE-FG03-92-ER40701. I would like to thank Alexei Kitaev for introducing Effros' work.
\bibliographystyle{unsrt}

\end{document}